\newtheorem{theorem}{Theorem}
\newtheorem{corollary}{Corollary}[theorem]
\newtheorem{lemma}[theorem]{Lemma}
\newcolumntype{L}[1]{>{\raggedright\let\newline\\\arraybackslash\hspace{0pt}}m{#1}}
\newcolumntype{C}[1]{>{\centering\let\newline\\\arraybackslash\hspace{0pt}}m{#1}}
\newcolumntype{R}[1]{>{\raggedleft\let\newline\\\arraybackslash\hspace{0pt}}m{#1}}
\def\tmix{t_{\mathrm{mix}}}
\def\vrw{v_{\mathrm{RW}}}
\begin{document}

\begin{center}
    {\huge Scalable Estimation of Epidemic Thresholds via Node Sampling}
    
    {\Large Anirban Dasgupta and Srijan Sengupta\\ \vspace{3ex}}
    \let\thefootnote\relax\footnote{Anirban Dasgupta (anirbandg@iitgn.ac.in) is a Professor of Computer Science and Engineering of Indian Institute of Technology at Gandhinagar. His work is partially supported by grants from DBT India, Google and CISCO.
    Srijan Sengupta (sengupta@vt.edu) is an Assistant Professor of Statistics at Virginia Tech. His work is partially supported by an NIH R01 grant 1R01LM013309.}
    
\textbf{ABSTRACT}
\end{center}
Infectious or contagious diseases can be transmitted from one person to another through social contact networks.
In today's interconnected global society, such contagion processes can cause global public health hazards, as exemplified by the ongoing Covid-19 pandemic.
It is therefore of great practical relevance to investigate the network transmission of contagious diseases 
% through social contact networks 
from the perspective of statistical inference.
An important and widely studied boundary condition for contagion processes over networks is the so-called \textit{epidemic threshold}.
The epidemic threshold plays a key role in determining whether a pathogen introduced into a social contact network will cause an epidemic or die out.
In this paper, we investigate epidemic thresholds from the perspective of statistical network inference.
We identify two major challenges that are caused by high computational and sampling complexity of the epidemic threshold.
We develop two statistically accurate and computationally efficient approximation techniques to address these issues under the Chung-Lu modeling framework.
The second approximation, which is based  on random walk sampling, further enjoys the advantage of requiring data on a vanishingly small fraction of nodes.
We establish theoretical guarantees for both methods and demonstrate their empirical superiority.

\clearpage
\section{Introduction}

% Provide motivation for epidemiological models.

Infectious diseases are caused by pathogens, such as bacteria, viruses, fungi, and parasites.
Many infectious diseases are also contagious, which means the infection can be transmitted from one person to another when there is some interaction (e.g., physical proximity) between them.
Today, we live in an interconnected world where such contagious diseases could spread through social contact networks to become global public health hazards.
A recent example of this phenomenon is the Covid-19 outbreak caused by the so-called novel coronavirus (SARS-CoV-2) that has spread to many countries \citep{huang2020clinical,zhu2020novel,wang2020novel,sun2020early}.
This recent global outbreak has caused serious social and economic repercussions, such as massive restrictions on movement and share market decline \citep{chinazzi2020effect}.
It is therefore of great practical relevance to investigate the transmission of contagious diseases through social contact networks from the perspective of statistical inference.

Consider an infection being transmitted through a population of $n$ individuals.
According to the susceptible-infected-recovered (SIR) model of disease spread, the pathogen can be transmitted from an infected person (I) to a susceptible person (S) with an infection rate given by $\beta$, and an infected individual becomes recovered (R) with a recovery rate given by $\mu$.
% Once recovered, the individual is no longer susceptible to the disease.
This can be modeled as a Markov chain whose state at time $t$ is given by a vector $(X^t_1, \ldots, X^t_n)$, where $X^t_i$ denotes the state of the $i^{th}$ individual at time $t$, i.e., $X^t_i \in \{S, I, R\}$.
For the population of $n$ individuals, the state space of this Markov chain becomes extremely large with $3^n$ possible configurations, which makes it impractical to study the exact system.
This problem was addressed in a series of three seminal papers by Kermack and McKendrick \citep{kermack1927contribution,kermack1932contributions,kermack1933contributions}.
Instead of modeling the disease state of each individual at at a given point of time, they proposed compartmental models, where the goal is to model the number of individuals in a particular disease state (e.g., susceptible, infected, recovered) at a given point of time.
Since their classical papers, there has been a tremendous amount of work on compartmental modeling of contagious diseases over the last ninety years \citep{hethcote2000mathematics, van2002reproduction, brauer2012mathematical}.

Compartmental models make the unrealistic assumption of homogeneity, i.e., each individual is assumed to have the same probability of interacting with any other individual. 
In reality, individuals interact with each other in a highly heterogeneous manner, depending upon various factors such as age, cultural norms, lifestyle, weather, etc.
The contagion process can be significantly impacted by heterogeneity of interactions \citep{Meyers2005, Rocha2011, Galvani2005DimensionsSuperspreading, Woolhouse1997HeterogeneitiesPrograms}, and therefore compartmental modeling of contagious diseases can lead to substantial errors.

In recent years, contact networks have emerged as a preferred alternative to compartmental models \citep{keeling2005implications}.
Here, a node represents an individual, and an edge between two nodes represent social contact between them.
An edge connecting an infected node and a susceptible node represents a potential path for pathogen transmission. 
% We can then imagine a temporal contagion process on this network where the pathogen moves from infected nodes to non-infected nodes. 
This framework can realistically represent the heterogeneous nature of social contacts, and therefore provide much more accurate modeling of the contagion process than compartmental models.
% The network structure also makes computation less expensive than agent-based models since the network as a whole can be modeled, instead of tracking each individual agent.
Notable examples where the use of contact networks have led to improvements in prediction or understanding of infectious diseases include \cite{bengtsson2015using} and \cite{kramer2016spatial}.  

Consider the scenario where a pathogen is introduced into a social contact network and it spreads according to an SIR model.
It is of particular interest to know whether the pathogen will die out or lead to an epidemic. 
This is dictated by a set of boundary conditions known as the {\em epidemic threshold}, which depends on the SIR parameters $\beta$ and $\mu$ as well as the network structure itself.
% The epidemic threshold is commonly specified in terms of the {\em basic reproduction number}, $R_0$, defined as the average number of individuals infected by a typical infected individual in an otherwise uninfected population \citep{Diekmann2000MathematicalInterpretation}. When $R_0>1$, the contagion may invade; when $R_0<1$, it dies out. More generally, the epidemic threshold can be expressed in terms of the critical value of one or more model parameters. 
Above the epidemic threshold, the pathogen invades and infects a finite fraction of the population. Below the epidemic threshold, the prevalence (total number of infected individuals) remains infinitesimally small in the limit of large networks \citep{Pastor-Satorras2015EpidemicNetworks}.
There is growing evidence that such thresholds exist in real-world host-pathogen systems, and intervention strategies are formulated and executed based on estimates of the epidemic threshold. \citep{Dallas2018ExperimentalThreshold, shulgin1998pulse,wallinga2005measles,pourbohloul2005modeling,Meyers2005}.
Fittingly, the last two decades have seen a significant emphasis on studying epidemic thresholds of contact networks from several disciplines, such as computer science, physics, and epidemiology \citep{Newman2002, Wang2003EpidemicViewpoint, colizza2007invasion, Chakrabarti2008, gomez2010discrete, wang2016predicting, wang2017unification}. 
See \cite{leitch2019toward} for a complete survey on the topic of epidemic thresholds.

Concurrently but separately, network data has rapidly emerged as a significant area in statistics.
Over the last two decades, a substantial amount of methodological advancement has been accomplished in several topics in this area, such as community detection \citep{bickel2009nonparametric,zhao2012consistency,rohe2011spectral,sengupta2015spectral}, model fitting and model selection \citep{hoff2002latent,handcock2007model,krivitsky2009representing, wang2017likelihood,yan2014model,bickel2015hypothesis, senguptapabm}, hypothesis testing \citep{ghoshdastidar2018practical,tang2017semiparametric,tang2017nonparametric,bhadra2019bootstrapbased}, and anomaly detection \citep{zhao2018performance,sengupta2018anomaly,komolafe2019statistical}, to name a few.
The state-of-the-art toolbox of statistical network inference includes a range of random graph models and a suite of estimation and inference techniques.

However, there has not been any work at the intersection of these two areas, in the sense that the problem of estimating epidemic thresholds has not been investigated from the perspective of statistical network inference.
Furthermore, the task of computing the epidemic threshold based on existing results can be computationally infeasible for massive networks.
In this paper, we address these gaps by developing a novel sampling-based method to estimate the epidemic threshold under the widely used Chung-Lu model \citep{aiello2000random}, also known as the configuration model.
We prove that our proposed method has theoretical guarantees for both statistical accuracy and computational efficiency.
We also provide empirical results demonstrating our method on both synthetic and real-world networks.

The rest of the paper is organized as follows.
In Section 2, we formally set up the problem statement and formulate our proposed methods for approximating the epidemic threshold.
In Section 3, we desribe the theoretical properties of our estimators.
In Section 4, we report numerical results from synthetic as well as real-world networks.
We conclude the paper with discussion and next steps in Section 5.

\begin{table}[h]
  \caption{Common Symbols}
  \begin{tabular}{|p{2.5cm}|p{11cm}|}
    \hline
    
    Symbol & Definition and Description \\
    
    \hline
    
    % $\mathbf{A}, \mathbf{B}, \ldots$ & matrices (bold upper case) \\
    
    % $\mathbf{I}$ & the $n \times n$ identity matrix \\
    
    $\lambda(\mathbf{A})$ & spectral radius of the matrix $\mathbf{A}$ \\

    $d_i$ & degree of the node $i$ of the network \\
    
     $\delta_i$ & expected degree of the node $i$ of the network\\
     \hline
    
    $S(t), I(t), R(t)$ & number of susceptible ($S$), infected ($I$), and recovered/removed ($R$) individuals in the population at time $t$ \\
    
    $\beta$ & infection rate: probability of transmission of a pathogen from an infected individual to a susceptible individual per effective contact (e.g. contact per unit time in continuous-time models, or per time step in discrete-time models) \\
    
    $\mu$ & recovery rate: probability that an infected individual will recover per unit time (in continuous-time models) or per time step (in discrete-time models) \\
    
    % $\gamma$ & transmissibility: the infection rate rescaled by $\mu^{-1}$ so that $\lambda = \beta/\mu$ \\
      \hline
  \end{tabular}
  \label{tbl:notation}
\end{table}

\section{Epidemic thresholds}
Consider a set of $n$ individuals labelled as $1, \ldots, n$, and an undirected network (with no self-loops) representing interactions between them.
This can represented by an $n$-by-$n$ symmetric adjacency matrix $A$, where $A(i,j) = 1$ if individuals $i$ and $j$ interact and $A(i,j) = 0$ otherwise.
Consider a pathogen spreading through this contact network according to an SIR model.
From existing work  \citep{Chakrabarti2008,gomez2010discrete,Prakash2010,wang2016predicting,wang2017unification}, we know that the boundary condition for the pathogen to become an epidemic is given by
\begin{equation}
\label{epithresh}
    \frac{\beta}{\mu} = \frac{1}{\lambda(A)},
\end{equation}
where $\lambda(A)$ is the spectral radius of the adjacency matrix $A$.

The left hand side of Equation \eqref{epithresh} is the ratio of the infection rate to the recovery rate, which is purely a function of the pathogen and independent of the network.
As this ratio grows larger, an epidemic becomes more likely, as new infections outpace recoveries.
The right hand side of Equation \eqref{epithresh} is the spectral radius of the adjacency matrix, which is purely a function of the network and independent of the pathogen.
Larger the spectral radius, the more connected the network, and therefore an epidemic becomes more likely.
Thus, the boundary condition in Equation \eqref{epithresh} connects the two aspects of the contagion process, the pathogen transmissibility which is quantified by $\beta/\mu$, and the social contact network which is quantified by the spectral radius.
If $\frac{\beta}{\mu} < \frac{1}{\lambda(A)}$, the pathogen dies out, and if $\frac{\beta}{\mu} > \frac{1}{\lambda(A)}$, the pathogen becomes an epidemic.
% The transition point $\frac{\beta}{\mu} = \frac{1}{\lambda(A)}$ represents the epidemic threshold.

% We can also interpret this result from a network perspective.
Given a social contact network, the inverse of the spectral radius of its adjacency matrix represents the epidemic threshold for the network.
Any pathogen whose transmissiblity ratio is greater than this threshold is going to cause an epidemic, whereas any pathogen whose transmissiblity ratio is less than this threshold is going to die out.
% We could obtain the transmissibility ratio of a certain pathogen from past records, and compare it to the epidemic threshold to determine whether it is going to become an epidemic.
Therefore, a key problem in network epidemiology is to compute the spectral radius of the social contact network.

\subsection{Problem statement and heuristics}
Realistic urban social networks that are used in modeling contagion processes have millions of nodes \citep{eubank2004modelling,barrett2008episimdemics}.
To compute the epidemic threshold of such networks, we need to find the largest (in absolute value) eigenvalue of the adjacency matrix $A$.
This is challenging because of two reasons.
\begin{enumerate}
    \item First, from a computational perspective, eigenvalue algorithms have computational complexity of $\Omega(n^2)$ or higher.
    For  massive social contact networks with millions of nodes, this can become too burdensome. 
    
    \item Second, from a statistical perspective, eigenvalue algorithms require the entire adjacency matrix for the full network of $n$ individuals.
    It can be challenging or expensive to collect interaction data of $n$ individuals of a massive population (e.g., an urban metropolis). 
    Furthermore, eigenvalue algorithms typically require the full matrix to be stored in the random-access memory of the computer, which can be infeasible for massive social contact networks which are too large to be stored.
    
   \end{enumerate}
 
 The first issue could be resolved if we could compute the epidemic threshold in a computationally efficient manner.
 The second issue could be resolved if we could compute the epidemic threshold only using data on a small subset of the population.
 In this paper, we aim to resolve both issues by developing two approximation methods for computing the spectral radius.
 
% Therefore, finding spectral radius of massive contact networks can be infeasible for realistic contact networks.
% Our goal is to develop computationally efficient and provably accurate methods for finding the epidemic threshold.
% Our goal in this paper is to address this problem.

To address these problems, let us look at the spectral radius, $\lambda(A)$, from the perspective of random graph models.
The statistical model is given by $A \sim P$, which is short-hand for $A(i,j) \sim \text{Bernoulli}(P(i,j))$ for $1 \le i < j \le n$.
Then $\lambda(A)$ converges to $\lambda(P)$ in probability under some mild conditions \citep{chung2011spectra,benaych2019largest,bordenave2019spectral}.
To make a formal statement regarding this convergence, we reproduce below a slightly paraphrased version (for notational consistency) of an existing result in this context.

\begin{lemma}[Theorem 1 of \cite{chung2011spectra}]
\label{lem:ChungRadcliffe2011}
Let $$
\Delta = \max_{1 \le i \le n} \sum_{j=1}^n P(i,j)
$$
be the maximum expected degree,
and suppose that for some $\epsilon>0$, 
$$
\Delta > \frac{4}{9} \log(2n/\epsilon)
$$
for sufficiently large $n$.
Then with probability at least $1-\epsilon$, for sufficiently large $n$, 
$$
|\lambda(A) - \lambda(P)| \le 2\sqrt{\Delta \log(2n/\epsilon)}.
$$
\end{lemma}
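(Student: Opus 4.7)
The plan is to reduce the claim to a matrix concentration inequality. By Weyl's inequality applied to the symmetric matrices $A$ and $P$,
$$|\lambda(A) - \lambda(P)| \le \|A - P\|,$$
where $\|\cdot\|$ denotes the spectral norm. Thus it is enough to show that with probability at least $1-\epsilon$,
$$\|A - P\| \le 2\sqrt{\Delta \log(2n/\epsilon)}.$$

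To bound $\|A-P\|$, I would decompose $A - P = \sum_{1 \le i < j \le n} E_{ij}$, where $E_{ij}$ is the symmetric matrix whose only nonzero entries are $E_{ij}(i,j) = E_{ij}(j,i) = A(i,j) - P(i,j)$. These are independent, mean-zero, symmetric random matrices with $\|E_{ij}\| \le 1$, since each centered Bernoulli entry is bounded by $1$ in absolute value. For the variance proxy,
$$V \;:=\; \sum_{i<j} \mathbb{E}[E_{ij}^{\,2}],$$
a direct computation shows that $V$ is diagonal with $V(i,i) = \sum_{j \neq i} \mathrm{Var}(A(i,j)) \le \sum_{j} P(i,j) \le \Delta$, so $\|V\| \le \Delta$. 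These are precisely the two ingredients required to invoke the matrix Bernstein inequality.

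The core step is then to apply matrix Bernstein to obtain a tail bound of the form
$$\mathbb{P}\bigl[\|A-P\| \ge t\bigr] \;\le\; 2n \exp\!\left(-\frac{t^2/2}{\Delta + t/3}\right),$$
and to plug in $t = 2\sqrt{\Delta \log(2n/\epsilon)}$. The role of the hypothesis $\Delta > \tfrac{4}{9}\log(2n/\epsilon)$ is precisely to guarantee that $t/3 \le \Delta$, so that the denominator in the Bernstein exponent is at most $2\Delta$, and the exponent itself is at most $-\log(2n/\epsilon)$. The right-hand side is then at most $2n \cdot (\epsilon/(2n)) = \epsilon$, which is exactly the desired bound.

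The main obstacle is not conceptual but bookkeeping: one has to match the parameters of matrix Bernstein carefully so that the hypothesis on $\Delta$ exactly kills the linear (subexponential) term in the denominator and leaves a clean Gaussian-like tail with constants matching $2\sqrt{\Delta \log(2n/\epsilon)}$. The genuinely deep ingredient, the noncommutative Bernstein inequality for sums of independent bounded symmetric random matrices, is a black-box tool that we would cite and invoke rather than reprove.
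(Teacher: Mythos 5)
The paper states this lemma without proof, quoting it directly as Theorem~1 of Chung and Radcliffe (2011), so there is no internal proof to compare against; your argument is, however, essentially a correct reconstruction of the proof in that reference, which likewise reduces $|\lambda(A)-\lambda(P)|$ to $\|A-P\|$ and applies a matrix Bernstein/Ahlswede--Winter concentration bound to the edge-indexed decomposition $\sum_{i<j}E_{ij}$ with $\|E_{ij}\|\le 1$ and variance proxy $\|V\|\le\Delta$. Your bookkeeping also checks out: with $t=2\sqrt{\Delta\log(2n/\epsilon)}$ the hypothesis $\Delta>\tfrac{4}{9}\log(2n/\epsilon)$ gives $t<3\Delta$, hence the Bernstein denominator is below $2\Delta$ and the tail is below $2n\cdot\epsilon/(2n)=\epsilon$, exactly reproducing the stated constants.
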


To make note of a somewhat subtle point: from an inferential perspective it is tempting to view the above result as a consistency result, where $\lambda(P)$ is the population quantity or parameter of interest and $\lambda(A)$ is its estimator.
However, in the context of epidemic thresholds, we are interested in the random variable $\lambda(A)$ itself, as we want to study the contagion spread conditional on a given social contact network.
Therefore, in the present context, the above result should not be interpreted as a consistency result.

Rather, we can use the convergence result in a different way.
For massive networks, the random variable $\lambda(A)$, which we wish to compute but find it infeasible to do so, is close to the parameter $\lambda(P)$.
Suppose we can find a random variable $T(A)$ which also converges in probability to $\lambda(P)$, and is computationally efficient.
Since $T(A)$ and $\lambda(A)$ both converge in probability to $\lambda(P)$, 
% $T(A)-\lambda(A)$ converges to zero in probability.
% Therefore, 
we can use $T(A)$ as an accurate proxy for $\lambda(A)$.
This would address the first of the two issues described at the beginning of this subsection.
Furthermore, if $T(A)$ can be computed from a small subset of the data, that would also solve the second issue.
This is our central heuristic, which we are going to formalize next.

\subsection{The Chung-Lu model}
So far, we have not made any structural assumptions on $P$, we have simply considered the generic inhomogeneous random graph model.
Under such a general model, it is very difficult to formulate a statistic $T(A)$ which is cheap to compute and converges to $\lambda(P)$.
Therefore, we now introduce a structural assumption on $P$, in the form of the well-known Chung-Lu model that was introduced by \cite{aiello2000random} and subsequently studied in many papers \citep{chung2002average,chung2003eigenvalues,decreusefond2012large,pinar2012similarity,zhang2017random}.
For a network with $n$ nodes, let $\mathbf{\delta} = (\delta_1, \ldots, \delta_n)'$ be the vector of expected degrees.
Then under the Chung-Lu model,
\begin{equation}
    \label{chung-lu}
    P(i,j) = \frac{\delta_i \delta_j}{\sum_{k=1}^n \delta_k}.
\end{equation}
This formulation preserves $E[d_i] = \delta_i$, where $d_i$ is the degree of the $i^{th}$ node, and is very flexible with respect to degree heterogeneity.

Under model \eqref{chung-lu}, note that $rank(P)=1$, and we have
\begin{align*}
    &P = \frac{1}{{\sum_{i=1}^n \delta_i}} \delta\delta'\\ 
 \Rightarrow \;   &P \delta 
 =   \frac{1}{{\sum_{i=1}^n \delta_i}} \delta\delta'\delta
 = \frac{\sum_{i=1}^n \delta_i^2}{{\sum_{i=1}^n \delta_i}}\delta \\
 \Rightarrow \; &\lambda(P) = \frac{\sum_{i=1}^n \delta_i^2}{{\sum_{i=1}^n \delta_i}}.
\end{align*}
Recall that we are looking for some computationally efficient $T(A)$ which converges in probability to $\lambda(P)$.
We now know that under the Chung-Lu model, $\lambda(P)$ is equal to the ratio of the second moment to the first moment of the degree distribution.
Therefore, a simple estimator of $\lambda(P)$ is given by the sample analogue of this ratio, i.e.,
\begin{equation}
 T_1(A) = \frac{\sum_{i=1}^n d_i^2}{{\sum_{i=1}^n d_i}}.   
\end{equation}
\label{eq:newmanthres}

We now want to demonstrate that approximating $\lambda(A)$ by $T_1(A)$ provides us with very substantial computational savings with little loss of accuracy.
The approximation error can be quantified as 
\begin{equation}
    \label{eq:err1}
    e_1(A) = \left|\frac{T_1(A)}{\lambda(A)}-1\right|,
\end{equation}
and our goal is to show that $e_1(A) \rightarrow 0$ in probability, while the computational cost of $T_1(A)$ is much smaller than that of $\lambda(A)$.
We will show this both from a theoretical perspective and an empirical perspective.
We next describe the empirical results from a simulation study, and we postpone the theoretical discussion to Section 3 for organizational clarity.

% \subsection{Computational efficiency vs statistical accuracy}
We used $n=5000, 10000$, and constructed a Chung-Lu random graph model where $P(i,j) = \theta_i \theta_j$.
The model parameters $\theta_1, \ldots, \theta_n$ were uniformly sampled from $(0, 0.25)$.
Then, we randomly generated 100 networks from the model, 
and computed $\lambda(A)$ and $T_1(A)$.
The results are reported in Table \ref{tbl:t1A}.
Average runtime for the moment based estimator, $T_1(A)$, is only 0.07 seconds for $n=5000$ and 0.35 seconds for $n=10000$, whereas for the spectral radius, $\lambda(A)$, it is 78.2 seconds and 606.44 seconds respectively, which makes the latter 1100-1700 times more computationally burdensome.
The average error for $T_1(A)$ is very small, and so is the SD of errors.
Thus, even for moderately sized networks where $n=5000$ or $n=10000$, using $T_1(A)$ as a proxy for $\lambda(A)$ can reduce the computational cost to a great extent, and the corresponding loss in accuracy is very small.
For massive networks where $n$ is in millions, this advantage of $T_1(A)$ over $\lambda(A)$ is even greater; however, the computational burden for $\lambda(A)$ becomes so large that this case is difficult to illustrate using standard computing equipment.

Thus, $T_1(A)$ provides us with a computationally efficient and statistically accurate method for finding the epidemic threshold.

\begin{table}[h]
\centering{
  \caption{Computational efficiency and statistical accuracy of $T_1(A)$
  \label{tbl:t1A}}
  \begin{tabular}{|r|r|r|r|r|}
    \hline
    
   $n$ & Mean Time for $\lambda(A)$ & Mean Time for $T_1(A)$ & Mean Error & SD of Error  
    \\
    \hline
    5000 & 78.20 seconds & 0.07 seconds & 0.0012 & 0.0003  
    \\
    \hline
   10000 & 606.44 seconds & 0.35 seconds & 0.0005 & 0.0002  
    \\
     \hline
      \end{tabular}}
  \end{table}

\subsection{Sampling based approximation}
The first approximation, $T_1(A)$, provides us with a computationally efficient method for finding the epidemic threshold.
This addresses the first issue pointed out at the beginning of Section 2.1.
However, computing $T_1(A)$ requires data on the degree of all $n$ nodes of the network.
Therefore, this does not solve the second issue pointed out at the beginning of Section 2.1.
% In the next section, we propose a second approximation of the epidemic threshold to address this issue.
We now propose a second alternative, $T_2$, to address the second issue. 
The idea behind this approximation is based on the same heuristic that was laid out in Section 2.2.
Since $\lambda(P)$ is a function of degree moments, we can estimate these moments using observed node degrees.
In defining $T_1(A)$, we used observed degrees of all $n$ nodes in the network.
However, we can also estimate the degree moments by considering a small sample of nodes, based on random walk sampling.
The algorithm for computing $T_2$ is given in Algorithm \ref{alg:rw}.

\begin{algorithm}[h]
\caption{RandomWalkEstimate}\label{alg:rw}
\begin{algorithmic}[1]
\Procedure{Estimate}{$G, r, t^*$}
\label{alg:rwestimate}
\State $x \leftarrow 1.$
\While {$t \le t^* $}
    \State $x \leftarrow $ random neighbor of $x$, chosen uniformly.
\EndWhile
\State $v \leftarrow 0$.
\While{$i \leq r$}
\State $v = v + d_{x}$
\State $x \leftarrow $ random neighbor of $x$, chosen uniformly.
\EndWhile
\State return $T_2 = v / r$.
\EndProcedure

\end{algorithmic}
\end{algorithm}

Note that we only use $(t^*+r)$ randomly sampled nodes for computing $T_2$, which implies that we do not need to collect or store data on the $n$ individuals.
Therefore this method overcomes the second issue pointed out at the beginning of Section 2.1.
The approximation error arising from this method can be defined as
\begin{equation}
    \label{eq:err2}
    e_2(A) = \left|\frac{T_2(A)}{\lambda(A)}-1\right|,
\end{equation}
and we want to show that $e_2(A) \rightarrow 0$ in probability, while the data-collection cost of $T_2(A)$ is much less than that of $T_1(A)$.
In the next section, we are going to formalize this.

\section{Theoretical results on approximation errors}
In this section, we are going to establish that the approximation errors $e_1(A)$ and $e_2(A)$, defined in Equations \eqref{eq:err1} and \eqref{eq:err2}, converge to zero in probability.
From Theorem 2.1 of \cite{chung2003eigenvalues}, we know that when \begin{equation}
\label{eq:chunglucond}
    \frac{\sum_i \delta_i^2}{\sum_i \delta_i} > 
    \log(n) \sqrt{\max_{1 \le i \le n} \delta_i}
\end{equation}
holds, then for any $\epsilon >0$,
$$
P\left[\left|\frac{\lambda(A)}{\lambda(P)}-1\right| > \epsilon\right]
\rightarrow 0.
$$
Therefore, under \eqref{eq:chunglucond}, it suffices to show that, for any  $\epsilon >0$,
$$
P\left[\left|\frac{T_1(A)}{\lambda(P)}-1\right| > \epsilon\right]
\rightarrow 0,
\text{ and }
P\left[\left|\frac{T_2(A)}{\lambda(P)}-1\right| > \epsilon\right]
\rightarrow 0.
$$

\subsection{Convergence of $T_1(A)$}
First, consider $T_1(A) = \frac{\sum_{i=1}^n d_i^2}{{\sum_{i=1}^n d_i}}$, and recall that $\lambda(P) = \frac{\sum_{i=1}^n \delta_i^2}{{\sum_{i=1}^n \delta_i}}.$
For notational convenience, define $m_1 = \sum_{i=1}^n d_i, m_2 = \sum_{i=1}^n d_i^2, \mu_1 = \sum_{i=1}^n \delta_i, \mu_2 = \sum_{i=1}^n \delta_i^2$.
We would like to show that, under reasonable conditions, for any $\epsilon >0$,
\begin{equation}
P\left[\left|
\frac{m_2 \mu_1}
{m_1 \mu_2}
-1\right| > \epsilon\right]
\rightarrow 0.
\label{m1m2_1}
\end{equation}
We will show that for any $\epsilon' > 0$,
\begin{equation}
    P\left[\left|\frac{m_1}{\mu_1} - 1\right| > \epsilon'\right]
\rightarrow 0,
P\left[\left|\frac{m_2}{\mu_2} - 1\right| > \epsilon'\right]
\rightarrow 0.
\label{m1m2_2}
\end{equation}
We first prove that $\eqref{m1m2_2}$ implies $\eqref{m1m2_1}$.
Equation $\eqref{m1m2_2}$ implies that
$$
P\left[\left\{\left|\frac{m_1}{\mu_1} - 1 \right| > \epsilon'\right\} \cup \left\{\left|\frac{m_2}{\mu_2} - 1\right| > \epsilon'\right\}\right]
\rightarrow 0.
$$
Now, consider the event
$\left\{\left|\frac{m_1}{\mu_1} - 1 \right| \le \epsilon'\right\} \cap \left\{\left|\frac{m_2}{\mu_2} - 1\right| \le \epsilon'\right\}$.
Note that $m_2/m_1$ is a strictly increasing function of $m_2$ and a strictly decreasing function of $m_1$.
Therefore, for outcomes belonging to the above event, 
$$
\frac{\mu_2}{\mu_1} \times \frac{1-\epsilon'}{1+\epsilon'}
\le
\frac{m_2}{m_1}
\le
\frac{\mu_2}{\mu_1} \times \frac{1+\epsilon'}{1-\epsilon'}.
$$
Note that
$$
1 - \frac{1-\epsilon'}{1+\epsilon'} = 
\frac{2\epsilon'}{1+\epsilon'}
< 2\epsilon',
\text{ and }
\frac{1+\epsilon'}{1-\epsilon'}-1
=
\frac{2\epsilon'}{1-\epsilon'}
< 4\epsilon',
$$
given that $\epsilon' < 1/2$.
Now, fix $\epsilon >0$ and let $\epsilon' = \epsilon/4$.
Then,
$$
\eqref{m1m2_2}
\Rightarrow
P\left[\left|
\frac{m_2 \mu_1}
{m_1 \mu_2}
-1\right| > 4\epsilon'\right]
\rightarrow 0
% \Rightarrow
% P[|\frac{m_2/m_1}{\lambda_1(P)}-1| > \epsilon]\rightarrow 0
\Rightarrow \eqref{m1m2_1}.
$$
Thus, proving \eqref{m1m2_2} is sufficient for proving \eqref{m1m2_1}.

Next, we state and prove the theorem which will establish \eqref{m1m2_2}.

\begin{theorem}
If the average of the expected degrees goes to infinity, i.e.,
$
\frac{1}{n}{\sum_i \delta_i} \rightarrow \infty
$, and the spectral radius dominates $\log^2(n)$, i.e., $\frac{\sum_i \delta_i^2}{\sum_i \delta_i} = \omega(\log^2 n)$,
then for any $\epsilon > 0$,
$$
    P\left[\left|\frac{m_1}{\mu_1} - 1\right| > \epsilon\right]
\rightarrow 0, \text{ and }
P\left[\left|\frac{m_2}{\mu_2} - 1\right| > \epsilon\right]
\rightarrow 0.
$$

\end{theorem}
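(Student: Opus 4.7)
Under Chung-Lu the edge indicators $A(i,j)$ (for $i<j$) are independent $\text{Bernoulli}(P(i,j))$, so $m_1 = 2\sum_{i<j} A(i,j)$ is a sum of independent Bernoullis and $m_2 = \sum_i d_i^2$ decomposes cleanly node by node. The plan is to attack $m_1$ via a Chebyshev (second-moment) bound and $m_2$ via per-vertex multiplicative Chernoff bounds, after splitting the vertices into a low-degree piece whose contribution is absorbed by Markov's inequality and a high-degree piece where Chernoff is sharp enough to survive a union bound over $n$ nodes. A useful preliminary is that the hypotheses force $\delta_{\max}/\mu_1 \to 0$: any single $\delta_i$ comparable to $\mu_1$ would, via the validity constraint $\delta_i\delta_j \le \mu_1$, pin the remaining $\delta_j$ at $O(1)$ and contradict $\mu_1/n \to \infty$. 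This yields $\lambda(P) = o(\mu_1)$ and $E[d_i] = \delta_i(1 - \delta_i/\mu_1) = \delta_i(1+o(1))$ uniformly in $i$, so $E[m_1] = \mu_1(1+o(1))$ and $\sum_i E[d_i]^2 = \mu_2(1+o(1))$; it therefore suffices to compare $m_1$ and $m_2$ to their expectations.

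\textbf{Concentration of $m_1$.} Independence of the edge indicators gives $\text{Var}(m_1) = 4\sum_{i<j} P(i,j)(1-P(i,j)) \le 2 E[m_1] \le 2\mu_1$, so Chebyshev yields an error probability of order $1/(\epsilon^2 \mu_1)$, which vanishes because $\mu_1 = \omega(n)$. Combined with the expectation ratio tending to $1$, this proves $m_1/\mu_1 \to 1$ in probability.

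\textbf{Concentration of $m_2$.} Fix a scale $f(n)$ with $f(n) = \omega(\log n)$ and $f(n) = o(\lambda(P))$; such an $f$ exists precisely because $\lambda(P) = \omega(\log^2 n)$ opens a polylogarithmic window. Partition $[n] = H \cup L$ with $H = \{i : \delta_i \ge f(n)\}$. On $L$, the bound $E[d_i^2] \le \delta_i + \delta_i^2$ gives $E\bigl[\sum_{i\in L} d_i^2\bigr] = O(f(n)\mu_1) = o(\mu_2)$ and similarly $\sum_{i\in L} \delta_i^2 \le f(n)\mu_1 = o(\mu_2)$, so the low-degree part is negligible by Markov. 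On $H$, each $d_i$ is a sum of independent Bernoullis with mean at least $f(n)/2$ for large $n$, so multiplicative Chernoff gives a tail of order $\exp(-c\epsilon^2 f(n))$ and a union bound over the at most $n$ high-degree indices still leaves failure probability tending to $0$. On that good event $d_i^2 = (1 \pm O(\epsilon))E[d_i]^2$ for every $i \in H$, so $\sum_{i \in H} d_i^2 = (1 \pm O(\epsilon))\mu_2(1+o(1))$, and adding the two pieces yields $m_2/\mu_2 \to 1$.

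\textbf{Main obstacle.} The delicate step is the joint calibration of the threshold $f(n)$: the Chernoff-plus-union-bound on $H$ demands $f(n) \gg \log n$, while making the $L$-contribution $o(\mu_2)$ demands $f(n) \ll \lambda(P)$. These two requirements are precisely mediated by the hypothesis $\lambda(P) = \omega(\log^2 n)$, which makes both demands simultaneously satisfiable and is the tight scale for this strategy.
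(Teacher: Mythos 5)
Your proof is correct, but it takes a genuinely different route from the paper's on both halves. For $m_1$, the paper applies Hoeffding's inequality to $\sum_{i<j}A(i,j)$ with $\binom{n}{2}$ terms, obtaining a tail of $2\exp(-\epsilon^2\mu_1^2/n(n-1))$, which vanishes precisely because $\mu_1/n\to\infty$; your Chebyshev bound with $\mathrm{Var}(m_1)\le 2\mu_1$ gives a tail of order $1/(\epsilon^2\mu_1)$ and only needs $\mu_1\to\infty$, so it is both simpler and uses a strictly weaker consequence of the hypothesis. For $m_2$, the paper runs a single Chernoff--Hoeffding-plus-union-bound step giving $d_i\le\delta_i+\max\{\epsilon\delta_i,O(\log n)\}$ uniformly over all $i$, expands $\sum_i d_i^2$, and kills the resulting additive error $(n+\mu_1)\log^2 n$ directly with the two hypotheses; you instead split vertices at a calibrated threshold $f(n)$ with $\log n\ll f(n)\ll\lambda(P)$, dispatch the low-degree block by Markov and the high-degree block by multiplicative Chernoff plus a union bound. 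The two decompositions exploit the hypothesis $\lambda(P)=\omega(\log^2 n)$ in equivalent ways (your ``polylogarithmic window'' is exactly the paper's requirement that $(n+\mu_1)\log^2 n=o(\mu_2)$), but your version buys two things the paper elides: an explicit two-sided bound on $m_2$ (the paper's displayed chain only establishes the upper deviation), and a careful treatment of the gap between $E[d_i]=\delta_i(1-\delta_i/\mu_1)$ and $\delta_i$ in the no-self-loop model, which you justify by first showing $\delta_{\max}=o(\mu_1)$ from the constraint $P(i,j)\le 1$ together with $\mu_1/n\to\infty$. Both arguments are sound; yours is somewhat more self-contained and rigorous at the edges, while the paper's is shorter.
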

\begin{proof}
We will use Hoeffding's inequality \citep{hoeffding1994probability} for the first part, and we begin by stating the inequality for the sum of Bernoulli random variables.
Let $B_1, \ldots, B_m$ be $m$ independent (but not necessarily identically distributed) Bernoulli random variables, and $S_m = \sum_{i=1}^m B_i$.
Then for any $t > 0$,
$$
P[|S_m - {E}[S_m]| \geq t]
\le 
2 \exp\left({\frac{-2t^2}{m}}\right).
$$
In our case, 
$$
m_1 = \sum_{i=1}^n d_i
= \sum_{i=1}^n \sum_{j=1}^n A(i,j)
= 2 \sum_{i<j} A(i,j),
$$
and we know that $\{A(i,j): 1 \le i < j \le n\}$ are independent Bernoulli random variables. 
Fix $\epsilon > 0$ and note that $E[\sum_{i<j} A(i,j)] = \frac{1}{2}\mu_1$.
Using Hoeffding's inequality with $S_m = m_1/2$, $m = {n \choose 2}$, and $t = \frac{\epsilon}{2} \mu_1$,
we get
$$
P\left[\left|\frac{m_1}{2} - \frac{\mu_1}{2}\right| >
\frac{\epsilon}{2} \mu_1\right]
\le
2 \exp\left(-\epsilon^2
{\frac{\mu_1^2}{n(n-1)}}\right).
$$
Since $
\frac{1}{n}{\sum_i \delta_i} \rightarrow \infty
$, the right hand side goes to zero.
Therefore,
$$
P\left[\left|\frac{m_1}{\mu_1} - 1\right| > \epsilon\right]
\rightarrow 0.
$$
For the second part, we can characterize $m_2$ as following.  
$$E[m_2] = E[\sum_i d_i^2] = \sum_i (E[d_i])^2 + var(d_i) = \mu_2 + var(d_i),$$
and hence,
$$|m_2 - \mu_2 | \le |m_2 - E[m_2]| + |E[m_2] - \mu_2|. $$
We show that, under the given assumptions, with probability $1 - o(1)$, $|m_2 - E[m_2]|= o(\mu_2).$ Furthermore, $|E[m_2] - \mu_2| = o (\mu_2).$

As noted before, each $d_i$ is a sum of binomial random variables. By applying Chernoff-Hoeffding bound, and union bounding over all $i\in \{1, \ldots, n\}$, we can get, with probability $1 - o(1)$, and for any fixed $\epsilon \in (0,1)$,
$$ \forall i \in \{1, \ldots, n\},\ d_i \le \delta_i + \max\{\epsilon \delta_i, O(\log(n))\}. $$
Let the above event be called the event $\cal A$. If the event $\cal A$ happens, then,
\begin{align*}
   m_2 = \sum_i d_i^2& \le \sum_i \delta_i^2 + 2\delta_i\max(\epsilon \delta_i, \log n)+ \max(\epsilon^2 \delta_i^2, \log^2 n)\\
   & \le \mu_2 + 2\sum_i \delta_i(\epsilon \delta_i +  \log n) + (\epsilon^2 \delta_i^2 + \log^2 n)\\
   & \le \mu_2 + 3\epsilon\mu_2 + (n + \sum_i \delta_i)\log^2 n \\
   \left| \frac{m_2}{\mu_2} - 1 \right| & \le 3\epsilon +  (n + \sum_i \delta_i)\log^2 n / \mu_2
\end{align*}
Note that $\frac{n}{\mu_2} = \frac{1}{\sum_i \delta_i^2 / n} \rightarrow 0$ under the given assumption. Furthermore, 
\begin{align*}
    \frac{(\sum_i \delta_i)\log^2 n}{\sum_i \delta_i^2} = o(1) \rightarrow 0.
\end{align*}
Putting these together, and using $\epsilon' = 3\epsilon$ we have the given claim.

\end{proof}

Thus, we have proved that the approximation error for $T_1(A)$ goes to zero in probability.
we have already observed in Section 2.2 that the runtime for $T_1(A)$ is orders of magnitude faster that the runtime for $\lambda(A)$.
Therefore, $T_1(A)$ is both computationally efficient and statistically accurate as an approximation of the epidemic threshold.

\subsection{Convergence of $T_2(A)$}
Next, consider Algorithm \ref{alg:rw}.
Let $\pi$ denote the stationary distribution of the simple random walk on the given graph. 
Suppose the number of edges in the given graph is $m$.
Recall that, $\pi$ is given by $\pi_v = \frac{d_v}{\sum_v d_v}$ for all $v$.
For brevity, we define the mixing time of the graph $A$, denoted as $\tmix(A)$, to mean the number of steps required by the simple random walk to reach a distribution $\hat{\pi}$ such that $\|\hat{\pi} - \pi\|_1 = o(\frac{1}{n^2})$.
Let $T_2(A)$ be the estimate returned by the Algorithm \ref{alg:rw}. We first show an easy lemma that characterizes the bias of the estimator $T_2(A)$.
\begin{lemma}
\label{lem:unbiased}
If $x$ is a node that is randomly sampled from $\pi$, and $d_x$ is its degree, then $E[d_x]= \frac{\sum_i d_i^2}{\sum_i d_i}. $ Consequently if $\hat{\pi}$ is such that $\|\pi - \hat{\pi}\|_1 = o(n^{-2})$ and $x$ is sampled from $\hat{\pi}$, then $E[d_x]= (1 \pm o(1))\frac{\sum_i d_i^2}{\sum_i d_i}.$
\end{lemma}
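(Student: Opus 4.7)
\medskip

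The plan is to handle the two claims sequentially, since both reduce to short direct computations once we exploit the explicit form of $\pi$ and the boundedness of degrees.

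For the first claim, I would simply substitute the known stationary distribution of the simple random walk on an undirected graph, $\pi_v = d_v / \sum_u d_u$, into the expectation. This gives
\[
E[d_x] \;=\; \sum_v \pi_v \, d_v \;=\; \sum_v \frac{d_v}{\sum_u d_u}\, d_v \;=\; \frac{\sum_v d_v^2}{\sum_v d_v},
\]
which is exactly the stated identity. No probabilistic machinery is needed here; it is a one-line calculation once one recalls the form of $\pi$.

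For the second claim, I would control the deviation that arises from sampling under $\hat\pi$ rather than $\pi$ by the standard total-variation coupling argument. Writing
\[
\bigl| E_{\hat\pi}[d_x] - E_{\pi}[d_x] \bigr|
\;=\; \Bigl| \sum_v (\hat\pi_v - \pi_v)\, d_v \Bigr|
\;\le\; \bigl(\max_v d_v\bigr)\, \|\hat\pi - \pi\|_1,
\]
I would use the trivial bound $\max_v d_v \le n-1$ and the hypothesis $\|\hat\pi - \pi\|_1 = o(n^{-2})$ to conclude the difference is $o(n^{-1})$. Combining with the first claim, $E_{\hat\pi}[d_x] = \frac{\sum_i d_i^2}{\sum_i d_i} + o(n^{-1})$.

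The final step is to convert this additive $o(n^{-1})$ error into the multiplicative $(1 \pm o(1))$ statement. For this I would observe that
\[
\frac{\sum_i d_i^2}{\sum_i d_i} \;\ge\; \frac{\sum_i d_i}{n}
\]
by Cauchy--Schwarz, so the denominator is at least the average degree, which is bounded below by a constant (indeed, it tends to infinity under the running assumptions of the paper). Dividing the additive $o(n^{-1})$ error by a quantity bounded below by a positive constant yields a relative error of $o(1)$. The only mild subtlety I anticipate is ensuring no isolated nodes trivialize the lower bound on $\sum_i d_i^2/\sum_i d_i$, which is handled by the earlier assumption that the average expected degree diverges; otherwise this argument is essentially mechanical.
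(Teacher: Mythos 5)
Your proposal is correct and follows essentially the same route as the paper: the first claim is the same one-line substitution of $\pi_v = d_v/\sum_u d_u$, and the second claim uses the same bound $\bigl|\sum_v (\hat\pi_v - \pi_v) d_v\bigr| \le (\max_v d_v)\,\|\hat\pi - \pi\|_1 \le n \cdot o(n^{-2})$. Your only addition is the explicit final step converting the additive $o(n^{-1})$ error into the multiplicative $(1\pm o(1))$ form via the lower bound $\sum_i d_i^2/\sum_i d_i \ge \sum_i d_i / n$, a detail the paper leaves implicit; this is a welcome clarification but not a different argument.
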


\begin{proof}
It is easy to see that
$$ E_{x\sim \pi} [d_x] = \sum_{v = 1}^{n} d_v \times \pi_v =\frac{\sum_{v} d_v^2}{\sum_v d_v}.$$
We show the second claim as follows: 
$$ |E_{x\sim \pi} [d_x] - E_{x\sim \hat{\pi}} [d_x]| \le  \sum_{v = 1}^{n} d_v  |\pi_v - \hat{\pi_v}| \le n \|\pi - \hat{\pi}\|_1 = o(1).$$

\end{proof}
Next, we show that the estimator $\vrw$ is actually concentrated around its expectation. 

%@article{lezaud1998chernoff, title={Chernoff-type bound for finite Markov chains}, author={Lezaud, Pascal},  journal={Annals of Applied Probability}, pages={849--867}, year={1998}, publisher={JSTOR} }

\begin{theorem} [\cite{lezaud1998chernoff}]
\label{thm:lezaud}
Let $(X_n)$ be a irreducible and reversible Markov Chain on a finite set $V$
with $Q$ being the transition matrix. Let $\pi$ be the stationary distribution.
% Let $(P,\pi)$ be a irreducible and reversible Markov chain on a finite set $V$.
%
Let $f: V\rightarrow \Re$ be such that $E_\pi[f] = 0$, $\|f\|_\infty \leq 1$ and $0 < E_\pi[f^2]\le b^2$. Then, for any initial distribution $q$, any positive integer $r$ and all $0< \gamma \le 1$,

\begin{align*}
  \Pr_q \left[ r^{-1}\sum_{i=1}^r f(X_i)\ge \gamma \right] \le e^{-\varepsilon(P)/5} S_q \exp\left( -\frac{r\gamma^2\varepsilon(P)}{4b^2(1 + h(5\gamma/b^2))} \right),
\end{align*}

where $\varepsilon(Q) =1 - \lambda_2(Q)$, $\lambda_2(Q)$ being the second largest eigenvalue of $P$,  $S_q = \|q/\pi\|_2$ (in the $\ell_2(\pi)$ norm). 

$$ h(x) = \frac{1}{2}(\sqrt{1+x} - (1 - x/2)).$$

If $\gamma \ll b^2$ and $\varepsilon(P) \ll 1$, the bound is

$$ (1 +o(1))S_q \exp\left( - \frac{r\gamma^2\varepsilon(p)}{4b^2(1+o(1))} \right).$$

\label{thm:markovconc}
\end{theorem}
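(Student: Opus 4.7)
The plan is to prove this Lezaud-style bound via a multiplicative perturbation of the transition operator, in the spirit of Gillman's classical Chernoff bound for Markov chains. First, for $\theta \ge 0$, I would introduce the tilted kernel $Q_\theta = Q D_\theta$, where $D_\theta = \mathrm{diag}(e^{\theta f})$, acting on $\ell^2(\pi)$. A standard telescoping of transition probabilities gives
\[
E_q\!\left[\exp\!\Big(\theta \textstyle\sum_{i=1}^r f(X_i)\Big)\right] \;=\; \langle q,\, Q_\theta^{\,r} \mathbf{1}\rangle \;\le\; S_q \,\|Q_\theta\|_\pi^{\,r},
\]
where $\|\cdot\|_\pi$ is the operator norm on $\ell^2(\pi)$. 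By reversibility, $Q_\theta$ is self-adjoint in an appropriate inner product, so $\|Q_\theta\|_\pi = \rho(\theta)$, its Perron eigenvalue. The problem reduces to controlling $\rho(\theta)$, followed by the usual Chernoff optimization in $\theta$.

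Second, I would carry out a quantitative second-order expansion of $\log\rho(\theta)$ around $\theta=0$. At $\theta=0$ we have $\rho(0)=1$ with eigenvector $\mathbf{1}$, and analytic perturbation theory yields $\rho'(0)=E_\pi[f]=0$ together with a Rayleigh-type bound $\rho''(0) \le 2\,E_\pi[f^2]/\varepsilon(Q)$; the factor $1/\varepsilon(Q)$ appears because the resolvent $(I-Q)^{-1}$ on $\mathbf{1}^\perp$ has norm at most $1/\varepsilon(Q)$. The key quantitative step is to extend this second-order control globally in $\theta$, producing a bound of the shape
\[
\log \rho(\theta) \;\le\; \frac{\theta^{2}\, b^{2}}{\varepsilon(Q)}\bigl(1 + h(5\theta/b^{2})\bigr),
\]
for $\theta$ in the relevant range. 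This is obtained from a variational (Dirichlet form) characterization of $\rho(\theta)$ together with the expansion $D_\theta = I + \theta F + \tfrac{\theta^2}{2} F^2 + \cdots$, with the function $h$ absorbing all higher-order terms in a monotone way, using the hypotheses $\|f\|_\infty \le 1$ and $E_\pi[f^2]\le b^2$.

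Third, I would close the argument by Markov's inequality:
\[
\Pr_q\!\left[ r^{-1}\textstyle\sum_{i=1}^r f(X_i) \ge \gamma\right] \;\le\; e^{-\theta r\gamma}\, S_q\, \rho(\theta)^{r},
\]
and optimize over $\theta$. The optimal tilt is $\theta \asymp \gamma\,\varepsilon(Q)/b^{2}$, which produces precisely the exponent $-r\gamma^{2}\varepsilon(Q)/(4 b^{2}(1+h(5\gamma/b^{2})))$. The residual prefactor $e^{-\varepsilon(Q)/5}$ comes from a sharper accounting of $Q_\theta^{\,r}$ that splits off the one-dimensional top eigenspace from its orthogonal complement (on which $Q$ is contractive by the factor $1-\varepsilon(Q)$), and tracks the contribution of the initial distribution $q$ through $S_q$.

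The main obstacle is the global perturbation bound on $\rho(\theta)$: a naive Taylor expansion recovers the leading $\theta^{2}b^{2}/\varepsilon(Q)$ term but not the explicit constants nor, more importantly, a non-asymptotic bound valid uniformly for all $\theta$ with the precise correction $h(5\theta/b^{2})$. Reproducing Lezaud's inequality with exactly these numerical constants requires a careful bookkeeping of the remainder in the Dirichlet-form expansion, exploiting $\|f\|_\infty\le 1$ to keep every higher power $\|F^{k}\|_\pi \le 1$, and is precisely the technical content of \cite{lezaud1998chernoff}; in our use of the theorem we will simply invoke it as a black box.
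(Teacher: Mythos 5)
This statement is imported verbatim from \cite{lezaud1998chernoff} and the paper supplies no proof of it --- it is used purely as a black box in the analysis of $T_2(A)$. Your sketch is a faithful outline of Lezaud's actual argument (tilted operator $Q_\theta$, perturbation bound on its Perron eigenvalue via the spectral gap, Chernoff optimization in $\theta$), and since you ultimately defer the quantitative remainder bound to the cited reference exactly as the paper does, there is no substantive difference to report.
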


Using the above result, we bound the sample complexity of our estimator. We first quote the following result that we use to bound $\lambda_1$ of the transition matrix. 

\begin{theorem} Let $\epsilon, \delta\in (0, 1)$. Algorithm~\ref{alg:rwestimate}, using $r = \frac{1}{\varepsilon(Q) \epsilon^{3/2}} \times \frac{12m d_{\max}}{(\sum_v d_v^2)} \log(1/\delta)$ and $t^* \ge \tmix(G)$ returns an estimate $\vrw$ that satisfies, with probability $1 - \delta,$
$$ (1 - \epsilon)\frac{\sum_v d_v^2}{\sum_v d_v}\le T_2(A) \le  (1 + \epsilon)\frac{\sum_v d_v^2}{\sum_v d_v}. $$
The number of nodes that are touched by algorithm is $O(t^{*}  + r)$. 
\end{theorem}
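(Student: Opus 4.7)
The plan is to apply Lezaud's concentration inequality (Theorem \ref{thm:lezaud}) to a centered and normalized degree functional along the random walk, after using the burn-in to replace the point-mass start by one close to stationarity. Write $\mu = \sum_v d_v^2 / \sum_v d_v$ for the target quantity, and set $f(v) = (d_v - \mu)/d_{\max}$. Since $0 \le d_v, \mu \le d_{\max}$ we have $\|f\|_\infty \le 1$, and by Lemma \ref{lem:unbiased}, $E_\pi[f] = 0$. Using $\pi_v = d_v/(2m)$,
$$
E_\pi[f^2] \;\le\; \frac{E_\pi[d_X^2]}{d_{\max}^2} \;=\; \frac{\sum_v d_v^3}{2m\, d_{\max}^2} \;\le\; \frac{\sum_v d_v^2}{2m\, d_{\max}} \;=:\; b^2.
$$

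Next I would match deviations: the event $\{|T_2(A) - \mu| > \epsilon \mu\}$ is exactly $\{|r^{-1}\sum_{i=1}^r f(X_i)| > \gamma\}$ with $\gamma = \epsilon\mu/d_{\max} = \epsilon \sum_v d_v^2 / (2m\, d_{\max})$. In particular $\gamma/b^2 = \epsilon$, so $1 + h(5\gamma/b^2)$ is bounded by $1 + O(\epsilon)$, a multiplicative factor that can be absorbed into the $\epsilon^{3/2}$ slack. To pass from the deterministic start at vertex $1$ to the near-stationary start required by Lezaud's bound, I would invoke $t^* \ge \tmix(G)$: after the burn-in the walk's distribution $\hat\pi$ satisfies $\|\hat\pi - \pi\|_1 = o(n^{-2})$, and since $\pi_v \ge 1/(2m) \ge 1/n^2$ on any node of positive degree, $\hat\pi_v/\pi_v = 1 + o(1)$ uniformly, yielding $S_{\hat\pi} = \|\hat\pi/\pi\|_2 = 1 + o(1)$. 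Equivalently, one can couple with a chain started from $\pi$ and observe that the coupling succeeds throughout the measurement window except on an event of probability $o(n^{-2})$.

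Plugging $b^2$, $\gamma$, and $S_{\hat\pi} = 1 + o(1)$ into Theorem \ref{thm:lezaud}, and union-bounding over the two tails, gives
$$
\Pr\bigl[|T_2(A) - \mu| > \epsilon \mu\bigr] \;\le\; 2(1+o(1))\exp\!\left(-\frac{r\,\epsilon^2\,\sum_v d_v^2\,\varepsilon(Q)}{8m\, d_{\max}\,(1+O(\epsilon))}\right).
$$
Demanding that the right-hand side be at most $\delta$ and absorbing the $(1+O(\epsilon))$ factor into an $\epsilon^{-1/2}$ slack reproduces the prescribed $r = \frac{12 m d_{\max}}{\varepsilon(Q)\,\epsilon^{3/2}\,\sum_v d_v^2}\log(1/\delta)$. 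The node-count bound $O(t^* + r)$ is immediate since each step of Algorithm \ref{alg:rw} queries the neighbor list of exactly one new vertex.

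\textbf{Main obstacle.} The subtle step is the transfer from the stationary initial distribution required by Lezaud's bound to the point-mass start of the algorithm: this is precisely where the quantitative definition of $\tmix$ (mixing in $\ell_1$ to within $o(n^{-2})$) enters, in order to keep the $S_q = \|q/\pi\|_2$ prefactor at $1 + o(1)$. A secondary nuisance is reconciling the natural $\epsilon^2$ scaling from Lezaud's exponent with the statement's $\epsilon^{3/2}$ factor, but since $\epsilon \le 1$ implies $\epsilon^{3/2} \ge \epsilon^2$, the stated $r$ is in any case large enough for the cleanly computed exponent to deliver the desired concentration.
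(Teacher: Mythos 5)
Your overall strategy matches the paper's: center and rescale the degree functional as $f(v)=(d_v-\mu)/d_{\max}$, use the burn-in of length $t^*\ge \tmix(G)$ to bring the starting distribution $\hat\pi$ within $o(n^{-2})$ of $\pi$ in $\ell_1$ so that $S_{\hat\pi}=\|\hat\pi/\pi\|_2=1+o(1)$, and then invoke Theorem~\ref{thm:lezaud}. The one substantive divergence is your choice of variance proxy: you relax $E_\pi[f^2]=\sum_v d_v^3/(2m\,d_{\max}^2)$ up to $b^2=\sum_v d_v^2/(2m\,d_{\max})$ so that $\gamma/b^2=\epsilon$ exactly, whereas the paper keeps $b^2=d_{\max}^{-2}\sum_v d_v^3/\sum_v d_v$, for which $\gamma/b^2=\epsilon\, d_{\max}\sum_v d_v^2/\sum_v d_v^3$ is at least $\epsilon$ but typically much larger.

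That choice is where the gap opens. With your $b^2$ the denominator $4b^2(1+h(5\gamma/b^2))$ is $\Theta(b^2)$, the exponent is $\Theta\bigl(r\,\epsilon^2\,\varepsilon(Q)\sum_v d_v^2/(m\,d_{\max})\bigr)$, and forcing the tail below $\delta$ requires $r\gtrsim \frac{m\,d_{\max}}{\varepsilon(Q)\,\epsilon^{2}\,\sum_v d_v^2}\log(1/\delta)$. The prescribed $r$ scales as $\epsilon^{-3/2}$, and since $\epsilon<1$ gives $\epsilon^{-3/2}<\epsilon^{-2}$, the prescribed $r$ is \emph{smaller} than what your bound demands: your closing remark that ``$\epsilon^{3/2}\ge\epsilon^2$ implies the stated $r$ is in any case large enough'' has the consequence of the inequality backwards, and there is no $\epsilon^{-1/2}$ slack to absorb constants into --- you are short by a factor of $\epsilon^{1/2}$. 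The paper reaches the $\epsilon^{-3/2}$ rate precisely by \emph{not} discarding the factor $d_{\max}\sum_v d_v^2/\sum_v d_v^3\ge 1$: with the sharper $b^2$ the ratio $\gamma/b^2$ can be large, the Bennett-type correction $h(5\gamma/b^2)\le 6\epsilon^{1/2}d_{\max}\sum_v d_v^2/\sum_v d_v^3$ dominates the denominator, and dividing $\gamma^2/b^2=\epsilon^2(\sum_v d_v^2)^2/\bigl((\sum_v d_v)(\sum_v d_v^3)\bigr)$ by it produces the exponent $r\,\varepsilon(Q)\,\epsilon^{3/2}\sum_v d_v^2/\bigl(6(\sum_v d_v)d_{\max}\bigr)$. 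So either restore the tighter $b^2$ and carry the $h$ term through as the paper does, or weaken the claim to $r\propto\epsilon^{-2}$; as written, your argument establishes the latter rather than the theorem as stated. The remaining ingredients (the $S_{\hat\pi}=1+o(1)$ computation, the two-sided union bound, and the $O(t^*+r)$ query count) are fine and agree with the paper.
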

\begin{proof}
In our setting the set $V$ is the set of vertices. Define the function $f(X_i)$ as : $$ d_{\max} \times f(X_i) = d_{X_i} - E_{\pi}[d_{X_i}].$$
$f(\cdot)$ clearly satisfies $E_{\pi}[f] = 0$ and that $\|f\|_\infty \le 1$. We can bound $E_{\pi}[f^2]$ as
\begin{align*}
    E_{\pi}[f^2] \le d_{\max}^{-2}E_{\pi}[d_v^2] = d_{\max}^{-2} \sum_v \frac{d_v^2 \times d_v}{\sum_v d_v} =d_{\max}^{-2} \sum_v \frac{d_v^3}{\sum_v d_v} 
\end{align*}

Using the first $t^*$ steps, we reach the distribution $\hat{\pi}$ that satisfies $\|\pi - \hat{\pi}\|_1 = o(n^{-2})$. Hence,
\begin{align*}
    \|\hat{\pi} / \pi\|_2^2 & = \sum_v \pi_v(\hat{\pi}_v/\pi_v)^2 = \sum_v \hat{\pi}_v^2 / \pi_v = \sum_v(\pi_v + (\hat{\pi}_v - \pi_v))^2 / \pi_v\\
    & = \sum_v (\pi_v + 2 (\hat{\pi}_v - \pi_v) + (\hat{\pi}_v - \pi_v)^2/\pi_v)\\
    & = 1 + 2\times (1 - 1) + \sum_v (\hat{\pi}_v - \pi_v)^2/\pi_v \le 1 + \|\pi - \hat{\pi}\|_2^2 / \min(\pi_v)\\
    & \le 1 + \|\pi - \hat{\pi}\|_1^2 (\sum_v d_v)/d_{\min}
    = 1 + o(1).
\end{align*}
where the last step follows as $\|\pi - \hat{\pi}\|_1 = o(n^{-2})$.

We use $b^2 = d_{\max}^{-2} \sum_v \frac{d_v^3}{\sum_v d_v}$ and $\gamma = \epsilon d_{\max}^{-1}\times \frac{\sum_v d_v^2}{\sum_v d_v}$. Hence
\begin{align*}
    \gamma / b^2 = \epsilon d_{\max} \frac{\sum_v d_v^2 }{\sum_v d_v^3} \ \mathrm{and}\  
    \gamma^2 / b^2 = \epsilon^2 \frac{(\sum_v d_v^2)^2}{(\sum_v d_v) (\sum_v d_v^3)}
\end{align*}
Hence,
\begin{align*}
    h(5\gamma / b^2)& = \left( 1 + 5 \epsilon d_{\max} \frac{\sum_v d_v^2 }{\sum_v d_v^3}\right)^{1/2} - 1 +  5\epsilon d_{\max} \frac{\sum_v d_v^2 }{2\sum_v d_v^3} \le \left(5 \epsilon d_{\max} \frac{\sum_v d_v^2 }{\sum_v d_v^3}\right)^{1/2} + 2.5 \epsilon d_{\max} \frac{\sum_v d_v^2 }{\sum_v d_v^3} \\
    & \le 6\epsilon^{1/2} d_{\max} \frac{\sum_v d_v^2 }{\sum_v d_v^3}
\end{align*}

Plugging this, we get that 
\begin{align*}
        \frac{r\gamma^2 \varepsilon(Q)}{4b^2( 1 + h(5\gamma/b^2))} \ge r \varepsilon(Q)\times \epsilon^2 \frac{(\sum_v d_v^2)^2}{(\sum_v d_v) (\sum_v d_v^3)} \times (1 + 6\epsilon^{1/2} d_{\max} \frac{\sum_v d_v^2 }{\sum_v d_v^3})^{-1} \ge \frac{r \varepsilon(Q)\epsilon^{3/2}(\sum_v d_v^2)}{6(\sum_v d_v) d_{\max}}
\end{align*}
Setting $r = \frac{1}{\varepsilon(Q) \epsilon^{3/2}} \times \frac{6(\sum_v d_v) d_{\max}}{(\sum_v d_v^2)} \log(1/\delta)$, and using Theorem~\ref{thm:lezaud}, we can claim that, with probability $1 - \delta$, 
$$ T_2(A) \in (1\pm \epsilon) \frac{\sum_v d_v^2}{\sum_v d_v}.$$
The bound on the number of nodes touched/queried by the algorithm follows naturally. 
\end{proof}
Note that $Q = D^{-1}A$ has the same set of eigenvalues as the matrix $D^{-1/2}A D^{-1/2}$. For the Chung-Lu model, the eigenvalues of
the matrix $L= I - D^{-1/2}A D^{-1/2}$ can be bounded by the following result from \cite{chung2003eigenvalues}.

\begin{theorem}
Let $L = I - D^{-1/2}AD^{-1/2}$ denote the normalized Laplacian. Let $A$ be a random graph generated from the given expected degrees model, with expected degrees $\{\delta_i\}$, if the minimum expected degree $\delta_{\min}$ satisfies $\delta_{min} \gg \ln(n)$, then with probability at least $1-1/n= 1-o(1)$, we have that for all eigenvalues $\lambda_k(L) > \lambda_{\min}(L)$ of the 
Laplacian of $G$, 
$$\left| 1 - \lambda_k(L)\right| < 2 \sqrt{\frac{6\ln(2n)}{\delta_{\min}}} = o(1). $$
\end{theorem}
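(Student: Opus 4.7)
The plan is to recast the bound as an operator-norm perturbation statement for the normalized adjacency matrix $M = D^{-1/2}AD^{-1/2}$. Since $L = I - M$, the eigenvalues satisfy $|1 - \lambda_k(L)| = |\mu_k(M)|$, where $\mu_1(M) \ge \cdots \ge \mu_n(M)$ denote the eigenvalues of $M$; so the task reduces to showing that all eigenvalues of $M$ except the largest are close to $0$. Setting $\bar D = \mathrm{diag}(\delta_1,\ldots,\delta_n)$ and $\bar M = \bar D^{-1/2} P \bar D^{-1/2}$, the Chung-Lu factorization $P(i,j) = \delta_i\delta_j/\sum_k \delta_k$ gives $\bar M = \sqrt{\delta}\,\sqrt{\delta}^{\top}/\sum_k \delta_k$, a rank-one matrix whose unique nonzero eigenvalue is $1$ (with eigenvector proportional to $\sqrt{\delta}$) and whose other eigenvalues are all $0$. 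Hence it suffices to prove the operator-norm concentration $\|M - \bar M\|_{\mathrm{op}} \le 2\sqrt{6\ln(2n)/\delta_{\min}}$ with high probability and then invoke Weyl's inequality: because $\bar M$ has a single nonzero eigenvalue, every eigenvalue of $M$ other than the top one lies within $\|M - \bar M\|_{\mathrm{op}}$ of $0$, and the top one matches the excluded eigenvalue $\lambda_{\min}(L) = 0$ under $L = I - M$.

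The concentration argument I would run has two stages. In stage (a), Chernoff combined with a union bound, under the hypothesis $\delta_{\min} \gg \ln n$, yields $d_i = (1 + o(1))\delta_i$ uniformly in $i$, so $D$ and $\bar D$ are multiplicatively close and substituting $D^{-1/2}$ for $\bar D^{-1/2}$ in the normalization contributes only lower-order terms to $\|M - \bar M\|_{\mathrm{op}}$. In stage (b), one must control the pure-noise term $\|\bar D^{-1/2}(A - P)\bar D^{-1/2}\|_{\mathrm{op}}$. The cleanest route, and the one used in \cite{chung2003eigenvalues}, is the trace/moment method: for an even integer $k \asymp \ln n$, the expectation $E\,\mathrm{tr}\bigl[\bar D^{-1/2}(A - P)\bar D^{-1/2}\bigr]^{2k}$ is a weighted sum over closed walks whose $1/\sqrt{\delta_i \delta_j}$ edge weights pair against the Bernoulli second moments to yield a Wigner-type bound of order $(6\ln(2n)/\delta_{\min})^k$; Markov's inequality then turns this moment bound into the stated operator-norm tail.

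The main obstacle is matching the sharp constant $2\sqrt{6}$ in the concentration inequality. Off-the-shelf matrix Bernstein bounds give the correct rate $\sqrt{\ln n/\delta_{\min}}$ but with larger absolute constants, because they do not exploit the walk structure of the weighted adjacency matrix; reproducing the specific Chung-Lu-Vu constant seems to require the combinatorial moment-method bookkeeping sketched above. For the purposes of the present paper, however, any $o(1)$ bound is sufficient, so a matrix-Bernstein route would also complete the argument under the same assumption $\delta_{\min} \gg \ln n$ and yield the qualitative conclusion $|1 - \lambda_k(L)| = o(1)$ used downstream.
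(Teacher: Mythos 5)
The paper does not actually prove this statement: it is imported verbatim as a quoted result from \cite{chung2003eigenvalues}, so there is no in-paper proof to compare against. Your sketch is a faithful reconstruction of the standard argument behind that cited result: write $L = I - M$ with $M = D^{-1/2}AD^{-1/2}$, observe that the population version $\bar M = \bar D^{-1/2}P\bar D^{-1/2}$ is rank one under the Chung--Lu factorization with unique nonzero eigenvalue $1$, and convert an operator-norm bound on $M-\bar M$ into the eigenvalue statement via Weyl's inequality; you also correctly identify that the excluded eigenvalue $\lambda_{\min}(L)=0$ corresponds to the Perron eigenvalue of $M$ with eigenvector $\sqrt{d}$. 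Two places remain genuinely sketched rather than proved: (i) the passage from the random normalization $D$ to the deterministic $\bar D$, which requires splitting $M-\bar M$ into a pure-noise term and a renormalization term and propagating the uniform degree concentration $d_i=(1+o(1))\delta_i$ through the operator norm, and (ii) the concentration of the noise term itself, for which you describe the trace/moment method (and the matrix-Bernstein alternative) but do not carry out the combinatorics. You are right that the specific constant $2\sqrt{6}$ is an artifact of the cited proof's bookkeeping; for the way this theorem is used in the paper --- only to conclude $\varepsilon(Q)=1-o(1)$ in the corollary on query complexity --- any $o(1)$ bound under $\delta_{\min}\gg\ln n$ suffices, so your proposed route would serve the paper's purposes even without matching the constant.
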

It follows above that $\varepsilon(Q) = 1 - \lambda_2(Q) =1 -  \lambda_2(D^{-1/2}AD^{-1/2}) = \lambda_{n-1}(I - D^{-1/2}AD^{-1/2}) = 1 - o(1)$. Putting these together, we get the following corollary on the total number of node queries.
\begin{corollary}
For a graph generated from the expected degrees model, with probability $1 - 1/n$,  Algorithm~\ref{alg:rwestimate}, 
needs to query $$\ln(n) + \frac{1}{\epsilon^{3/2}} \times \frac{6(\sum_v d_v) d_{\max}}{(\sum_v d_v^2)} \log(1/\delta)$$ nodes in order to get a $(1\pm \epsilon)$ estimate of $\sum_v d_v^2/2m.$
\end{corollary}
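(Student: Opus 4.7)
The plan is to obtain the corollary as a direct consequence of the preceding theorem, plugging in the spectral-gap bound available for the Chung-Lu model. The preceding theorem already shows that, given a spectral gap $\varepsilon(Q)$ for the random-walk transition matrix $Q = D^{-1}A$, the random-walk estimator achieves a $(1\pm\epsilon)$ multiplicative accuracy with probability $1-\delta$ using $t^* + r$ node queries, where $r = \frac{1}{\varepsilon(Q)\epsilon^{3/2}} \cdot \frac{6(\sum_v d_v)d_{\max}}{\sum_v d_v^2}\log(1/\delta)$ and $t^*$ is any upper bound on the mixing time $\tmix(G)$. Thus the only missing ingredients are (i) a control on $\varepsilon(Q)$ under the Chung-Lu model and (ii) a control on $\tmix(G)$.

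For (i), I would invoke the Chung-Radcliffe-type theorem stated just before the corollary. Since $Q = D^{-1}A$ is similar to $D^{-1/2}AD^{-1/2}$, the eigenvalues of $Q$ coincide with those of $I - L$, where $L$ is the normalized Laplacian. By that theorem, with probability at least $1 - 1/n$, every nontrivial eigenvalue $\lambda_k(L)$ satisfies $|1 - \lambda_k(L)| = o(1)$, so in particular $\lambda_2(Q) = 1 - \lambda_{n-1}(L) = o(1)$, giving $\varepsilon(Q) = 1 - \lambda_2(Q) = 1 - o(1)$. Consequently $1/\varepsilon(Q) = 1 + o(1)$ and can be absorbed into the constant in the statement of $r$; this turns the $r$ from the previous theorem into $\tfrac{1}{\epsilon^{3/2}} \cdot \tfrac{6(\sum_v d_v)d_{\max}}{\sum_v d_v^2}\log(1/\delta)$ as required by the corollary.

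For (ii), I would argue via the standard $L^2$-mixing bound: for a reversible chain with spectral gap $\varepsilon(Q)$ and stationary distribution $\pi$, the $L^1$ distance to $\pi$ after $t$ steps decays as $\tfrac{1}{2}\sqrt{(1-\pi_{\min})/\pi_{\min}}\,(1-\varepsilon(Q))^t$, so $t^* = O(\varepsilon(Q)^{-1}\log(n/\pi_{\min}))$ suffices for $\|\hat{\pi} - \pi\|_1 = o(n^{-2})$. Under the Chung-Lu condition $\delta_{\min}\gg\ln n$, we have $\pi_{\min} = d_{\min}/(2m) \geq 1/\mathrm{poly}(n)$, and together with $\varepsilon(Q) = 1 - o(1)$ from step (i) this yields $t^* = O(\ln n)$, matching the $\ln n$ term in the corollary.

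Finally I would union-bound the $1/n$ failure probability of the eigenvalue estimate with the $\delta$ failure probability of the concentration step; the former dominates asymptotically, so the overall guarantee still holds with probability at least $1 - 1/n$, and the total number of queried nodes is $t^* + r = \ln n + \tfrac{1}{\epsilon^{3/2}} \cdot \tfrac{6(\sum_v d_v)d_{\max}}{\sum_v d_v^2}\log(1/\delta)$. The only mildly delicate step is the mixing-time estimate, since the definition of $\tmix$ used here demands the sharper $o(n^{-2})$ total-variation accuracy rather than the conventional $1/4$, but the spectral-gap argument above absorbs this into a constant factor in the logarithm and still yields $O(\ln n)$.
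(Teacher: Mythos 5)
Your proposal is correct and follows essentially the same route as the paper: it derives the corollary by substituting the Chung--Lu spectral-gap bound $\varepsilon(Q) = 1 - o(1)$ (from the normalized-Laplacian eigenvalue theorem, valid with probability $1-1/n$ when $\delta_{\min} \gg \ln n$) into the sample-complexity expression for $r$ from the preceding theorem. You are in fact somewhat more explicit than the paper, which leaves the $O(\ln n)$ mixing-time justification and the union bound over the two failure events implicit.
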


Note $\frac{6(\sum_v d_v) d_{\max}}{(\sum_v d_v^2)} \le \frac{6d_{\max}}{d_{\min}}$, but this is a loose bound, better bounds can be derived for power law degree distributions, for instance. 

Thus, we have proved that the approximation error for $T_2(A)$ goes to zero in probability.
In addition, Corollary 6.1 shows that the number of nodes that we need to query in order to have an accurate approximation is much smaller than $n$.
Furthermore, computing $T_2$ only requires node sampling and counting degrees, and therefore the runtime is much smaller than eigenvalue algorithms.
% is similar to or less than the runtime of $T_1$, since $T_2$ is also based on 
% we have already observed in Section 2.2 that the runtime for $T_1(A)$ is orders of magnitude faster that the runtime for $\lambda(A)$.
Therefore, $T_2(A)$ is a computationally efficient and statistically accurate approximation of the epidemic threshold, while also requiring a much smaller data budget compared to $T_1(A)$.

\section{Numerical results}
In this section, we characterize the empirical performance of our sampling algorithm on 
% a number of networks. We show the performance of the sampling algorithm on 
two synthetic networks, one generated from the Chung-Lu model and the second generated from the preferential attachment model of \cite{Barabasi1999EmergenceNetworks}.
% , as well as on a sample real world network from FaceBook. Table~\ref{tab:stats} contains the basic statistics of the networks that were used. 

\begin{table}[h]
\centering
\begin{tabular}{|l|c|c|c|c|}
\hline
Data & Nodes  & Edges  & $\lambda (A)$ & $T_1(A)$  \\
\hline
\hline
Chung-Lu & $50k$ & $72k$ & $43.83$ & $48.33$  \\
Pref-Attach & $50k$ & $250k$ & $37$ & $32.8$  \\
% Enron & $36.7k$ & $184k$ & $74.5$ & $140.1$\\
\hline
\end{tabular}
\caption{Statistics of the two synthetic datasets used.}
\label{tab:stats}
\end{table}

\subsection{Data}
Our first dataset is a graph generated from the Chung-Lu model of expected degrees. We generated a powerlaw sequence (i.e. fraction of nodes with degree $d$ is proportion to $d^{-\beta}$) with exponent $\beta = 2.5$ and then generated a graph with this sequence as the expected degrees. Table~\ref{tab:stats} notes that, as expected, the first eigenvalue $\lambda_1(A)$ is close to $\frac{\sum_v d_v^2}{\sum_v d_v}$.

The second dataset is generated from the preferential attachment model \citep{barabasi1999emergence}, where each incoming node adds 5 edges to the existing nodes, the probability of choosing a specific node as neighbor being proportional to the current degree of that node. While the preferential attachment model naturally gives rise to a directed graph, we convert the graph to an undirected one before running our algorithm. It is interesting to note that even in this case the Chung-Lu model does not hold, our first approximation, $T_1(A)$, is close to $\lambda(A)$.

% The third dataset is obtained from the SNAP repository \cite{snapnets} and represents the email graph between employees of Enron.  Nodes represent employees and presence of an edge between them implies that they have communicated over email. Again, we look at the undirected version of this graph. This dataset has been extensively studied before \citep{priebe2005scan,shetty2005discovering,zhao2018performance}, and it provides a real-world testbed for our method.

\subsection{Implementation Details}
In each of the networks, the random walk algorithm presented in Algorithm \ref{alg:rw} was used for sampling. The random walk was started from an arbitrary node and every $10^{th}$ node was sampled (to account for the mixing time) from the walk. These samples were then used to calculate $T_2(A)$. This experiment was repeated 10 times. These gave estimates $T_2^{1},\ldots,T_2^{10}$. We then calculate two relative errors $\forall i \in \{1, 2, \ldots, 10\}$,
\begin{align*}
      \epsilon^{T1-T2}_i =
\frac{\left|T_2^i - T_1(A)\right|}{T_1(A)}, \  \epsilon^{\lambda-T2}_i =
\frac{\left|T_2^i - \lambda(A)\right|}{\lambda(A)}.
\end{align*}
We plot the averages of $\{\epsilon^{T1-T2}_i\}$ and $\{\epsilon^{\lambda-T2}_i\}$ against the {\em actual number of nodes seen by the random walk}. Note that the x-axis accurately reflect how many times the algorithm actually queried the network, not just the number of samples used. Measuring the cost of uniform node sampling in this setting, for instance, would need to keep track of how many nodes are touched by a Metropolis-Hastings walk that implements the uniform distribution.

\subsection{Results}
Figure~\ref{fig:results} demonstrates the results.
% on each of the three networks. 
For the two synthetic networks, the algorithm is able to get a 10\% approximation to the statistic $T_1(A)$ by exploring at most 10\% of the network. With more samples from the random walk, the mean relative errors settle to around 4-5\%.
% For the Enron network too, by querying less than 10\% of the nodes, the algorithm gets a 2\% approximation to $T_1(A)$. Adding more random samples has a diminishing effect. 
However, once we measure the mean relative errors with respect to $\lambda(A)$, it becomes clearer that the estimator $T_2(A)$ does better when the graph is closer to the assumed (i.e. Chung-Lu) model. For the Chung-Lu graph, the mean error $\epsilon^{\lambda-T2}$ essentially is very similar to $\epsilon^{T1-T2}$, which is to be expected. For the preferential attachment graph too, it is clear that the estimate $T_2$ is able to achieve a better than $10\%$ relative error approximation of $\lambda(A)$.

% For the Enron graph, the difference between $\lambda(A)$ and $T_1(A)$ is larger. This gets reflected in the fact that we are able to achieve a $~15\%$ approximation to $T_1$, i.e. a $~25\%$ approximation of $\lambda$ using roughly $10\%$ of the network. 

Note that, if we were instead counting only the nodes whose degrees were actually used for estimation, the fraction of network used would be roughly $1-2\%$ in all the cases, the majority of the node cost actually goes in making the random walk mix. 

\begin{figure}[ht]
    \centering
        \includegraphics[width=0.45\textwidth]{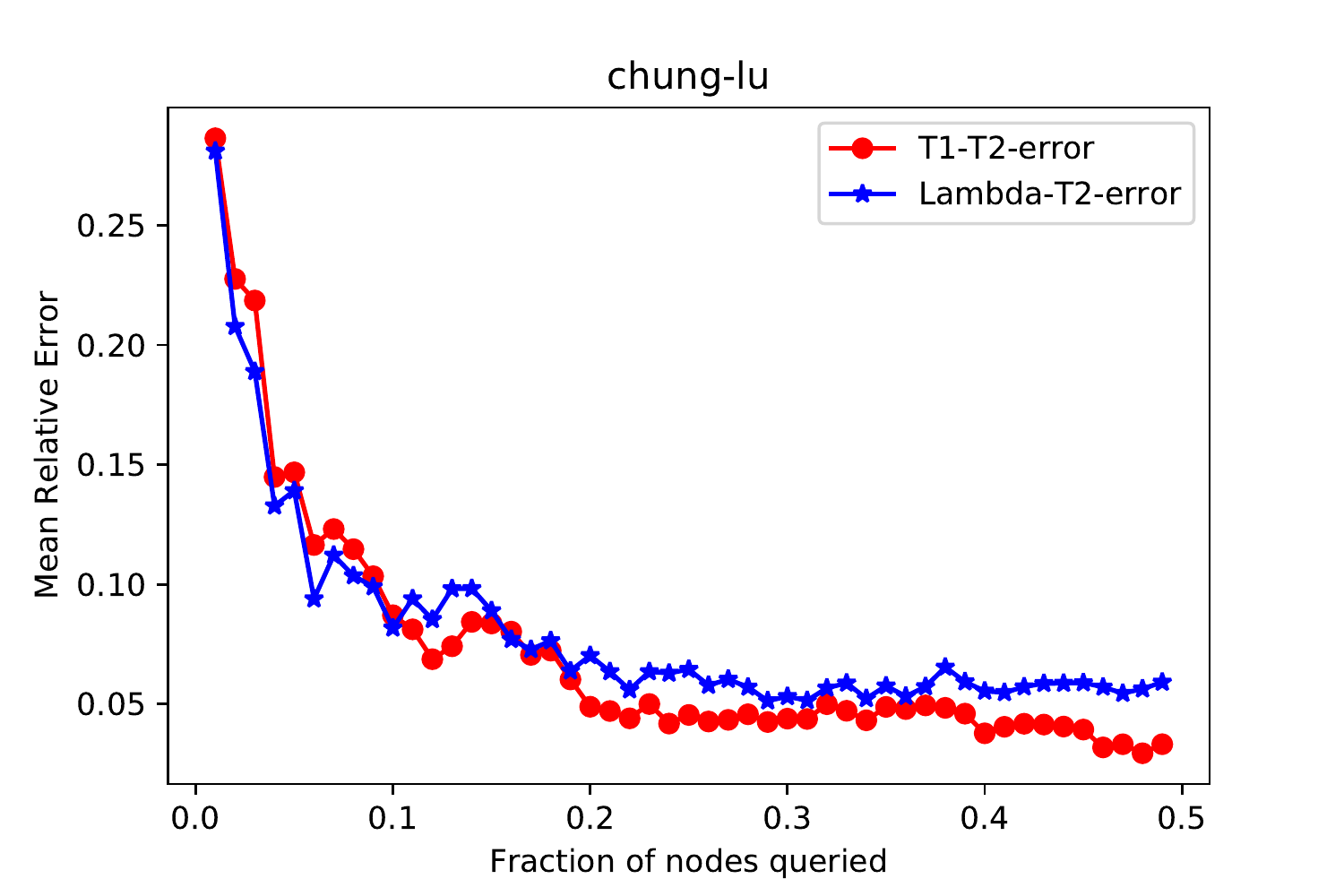}
    \includegraphics[width=0.45\textwidth]{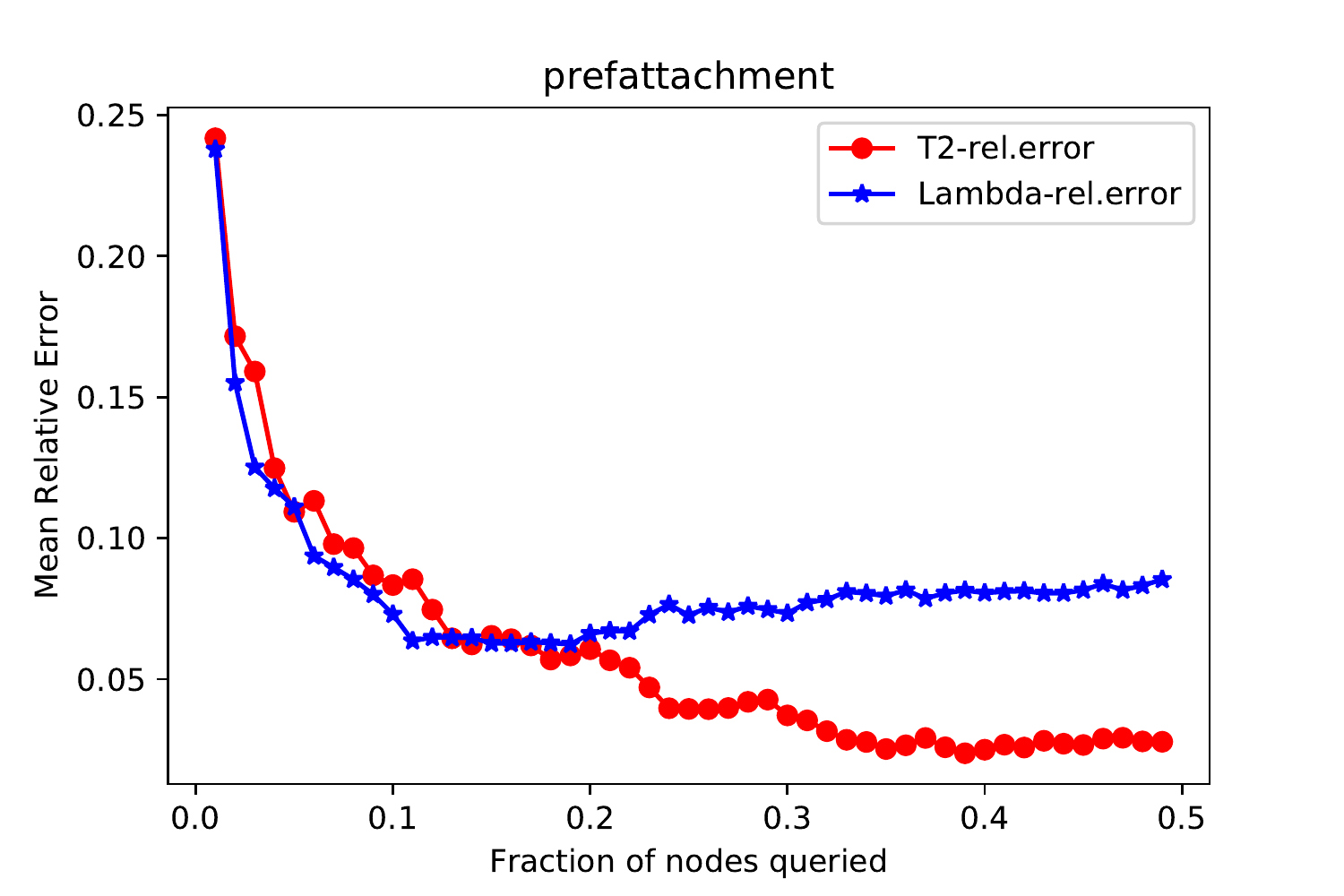}
    \caption{Results on two synthetic networks}
    \label{fig:results}
\end{figure}

\section{Discussion}
% (needs to be fleshed out).
In this work, we investigated the problem of computing SIR epidemic thresholds of social contact networks from the perspective of statistical inference.
We considered the two challenges that arise in this context, due to high computational and data-collection complexity of the spectral radius.
% estimating the SIR epidemic threshold, i.e. the first eigenvalue, of a network. 
For the Chung-Lu network generative model, the spectral radius can be characterized in terms of the degree moments.
We utilized this fact to develop two approximations of the spectral radius.
The first approximation is computationally efficient and statistically accurate, but requires data on observed degrees of all nodes.
The second approximation retains the computationally efficiency and statistically accuracy of the first approximation, while also reducing the number of queries or the sample size quite substantially.
The results seem very promising for networks arising from the Chung-Lu and preferential attachment generative models.

There are several interesting and important future directions.
The methods proposed in this paper have provable guarantees only under the Chung-Lu model, although it works very well under the preferential attachment model.
This seems to indicate that the degree based approximation might be applicable to a wider class of models.
On the other hand, this leaves open the question of developing a better ``model-free'' estimator, as well as asking similar questions about other network features. 

In this work we only considered the problem of accurate approximation of the epidemic threshold.
From a statistical as well as a real-world perspective, there are several related inference questions.
These include uncertainty quantification, confidence intervals, one-sample and two-sample testing, etc.

Social interaction  patterns vary dynamically over time, and such network dynamics can have significant impacts on the contagion process \cite{leitch2019toward}.
In this paper we only considered static social contact networks, and in future we hope to study epidemic thresholds for time-varying or dynamic networks.

\subsection{Disclaimer With Respect to Current Pandemic}

We do realize that in the face of the current pandemic, while it is important to pursue research relevant to it, it is also important to be responsible in following the proper scientific process. We would like to state that in this work, the question of epidemic threshold estimation has been formalized from a theoretical viewpoint in a much used, but simple, random graph model. We are not yet at a position to give any guarantees about the performance of our estimator in real social networks. We do hope, however, that the techniques developed here can be further refined to work to give reliable estimators in practical settings. 

\paragraph{Acknowledgements.} Anirban  acknowledges the kind support of the N. Rama Rao Chair Professorship at IIT Gandhinagar, the Google India AI/ML award (2020), Google Faculty Award (2015), and CISCO University Research Grant (2016).

% this property, the first eigenvalue, we can estimate this using statistic involving the degree moments. This characterization proves helpful in reducing the number of queries the estimation algorithm has to make of the network. 
% The results seem very promising for networks arising from the Chung-Lu and preferential attachment generative models.
% % , and moderately so for a real online social network that we tested upon. 
% Going forward, we believe this leaves open the question of creating a better ``model-free'' estimator, as well as asking similar questions about other network properties. 
% \begin{itemize}
%     \item Posed question about estimating statistical properties and how much the associated ``survey/data collection'' cost is. We are not assuming that we have an oracle that somehow gives us random sample from the population, as that does not always represent a realistic cost measure.   
%     \item Investigated it about network threshold. For graphs from Chung-Lu model, utilized connection of $\lambda(A)$ with degree moments. 
%     \item Designed random walk based algorithms that come with provable guarantees. 
%     \item Empirically, the estimation works beyond the Chung-Lu model setting.
%     \item Further algorithmic optimizations may be possible about the random walk.
% \end{itemize}

\bibliographystyle{apalike}
\bibliography{ref,ref-sampling}
\end{document}